\documentclass[12pt, a4paper]{amsart}

\usepackage{amsmath,amssymb,amsfonts}
\usepackage{bbold}
\usepackage{epic,eepic}
\usepackage{enumerate}

\usepackage{amsthm}
\usepackage[all]{xy}
\usepackage{caption}
\usepackage{stmaryrd} 
\usepackage[dvipdfmx]{graphicx,color} 
\usepackage{tikz}
\usetikzlibrary{trees}

\theoremstyle{plain}
\newtheorem{thm}{Theorem}[section]
\newtheorem{prop}[thm]{Proposition}
\newtheorem{lem}[thm]{Lemma}

\theoremstyle{definition}
\newtheorem{defn}[thm]{Definition}

\numberwithin{figure}{section}

\numberwithin{table}{section}

\DeclareMathOperator*{\esssup}{ess\,sup}

\newcommand{\lspace} {
  \vspace{0.8\baselineskip}
}

\newcommand{\abs}[1]{
  \lvert  #1 \rvert
}

\newcommand{\norm}[1]{
  \|  #1 \|
}


\newcommand{\Nat}{\mathrm{Nat}}

\newdir{ >}{{}*!/-5pt/@{>}}

\definecolor{arrowred}{rgb}{0,0,0} 
\newcommand{\newword}[1]{\textbf{\textit{#1}}}

\numberwithin{equation}{section}

\newcommand{\ProbX}{\bar{X}}
\newcommand{\ProbY}{\bar{Y}}
\newcommand{\ProbZ}{\bar{Z}}
\newcommand{\ProbNX}{\bar{X}}
\newcommand{\ProbNY}{\bar{Y}}
\newcommand{\ProbNZ}{\bar{Z}}
\newcommand{\Prob}{\mathbf{Prob}}

\newcommand{\Set}{\mathbf{Set}}

\mathchardef\mhyphen="2D  

\title [Monetary Value Measures]{Monetary Value Measures in a Category of Probability Spaces}

\thanks{This work was supported by JSPS KAKENHI Grant Number 26330026.}

\author[T. Adachi and Y. Ryu]{Takanori Adachi and Yoshihiro Ryu}
\address{Department of Mathematical Sciences,
         Ritsumeikan University,
         1-1-1 Nojihigashi, Kusatsu, Shiga, 525-8577 Japan}
\email{Takanori Adachi <taka.adachi@gmail.com>}
\email{Yoshihiro Ryu <iti2san@gmail.com>}


\keywords{
  conditional expectation,
  category theory
  monetary value measure,
}

\subjclass[2000]{
  Primary 
    91B30,   
    16B50;   
  secondary
    91B82,   
    18B99   
}

\begin{document}

\maketitle



In this paper,
we generalize the notion of monetary value measures
developed in 
\cite{adachi_2014crm}
by extending their base category from
the category
$\chi$
to
 the category
$\Prob$
introduced in
\cite{AR_2016_1}.

For those who are not familiar with financial risk management and/or monetary value measures,
please refer to Section 2 of 
\cite{adachi_2014crm}.


\section{A Category of Probability Spaces}
\label{sec:cat_prob}

In this section we overview 
a theory of a category of probability spaces.
Please refer to
\cite{AR_2016_1}
for the full discussions and proofs 
about the contents of this section.

\lspace

Let
$
\bar{X} :=
(X, \Sigma_X, \mathbb{P}_X)
$,
$
\bar{Y} :=
(Y, \Sigma_Y, \mathbb{P}_Y)
$
and
$
\bar{Z} :=
(Z, \Sigma_Z, \mathbb{P}_Z)
$
be
probability spaces.

\begin{defn}{[Category $\Prob$]}
\label{defn:cat_prob}
A category
$
\Prob
$
is the category whose objects are
all probability spaces and 
the set of arrows between them are defined by
\begin{align*}
\Prob(
  \ProbX,
&
  \ProbY
)
  :=
\{
f^-
  \mid
f :
  \ProbY
\to
  \ProbX
\\&
\textrm{ is a measurable function satisfying }
  \mathbb{P}_Y \circ f^{-1} \ll \mathbb{P}_X
\},
\end{align*}
where
$f^-$
is a symbol
corresponding uniquely to
a measurable function $f$.


\end{defn}

We fix the state space be
a measurable space
$
(
 \mathbb{R}, 
 \mathcal{B}(\mathbb{R})
)
$
for a simplicity.
$
\mathcal{L}^{\infty}(\ProbNX)
$
is a vector space consisting of 
$\mathbb{R}$-valued random variables
$v$
such that
$
\mathbb{P}_X\mhyphen\esssup_{x \in X}
\abs{v(x)}
  <
\infty
$,
while
$
\mathcal{L}^1(\ProbNX)
$
is a vector space consisting of 
$\mathbb{R}$-valued random variables
$v$
such that
$
\int_X
  \abs{v}
\, d \mathbb{P}_X
$
has a finite value.
For
two random variables
$
u_1
$
and
$
u_2
$,
we write
$
u_1
  \sim_{\mathbb{P}_X}
u_2
$
when
$
\mathbb{P}_X(
  u_1 \ne u_2
)
  =
0
$,
and write
$
   u_1
       \lesssim_{\mathbb{P}_X}
   u_2 
$
when
$
\mathbb{P}_X(
  u_1 > u_2
)
  =
0
$.
Note that
$
   u_1
       \lesssim_{\mathbb{P}_X}
   u_2 
$
and
$
   u_2
       \lesssim_{\mathbb{P}_X}
   u_1 
$
iff
$
   u_1
       \sim_{\mathbb{P}_X}
   u_2 
$.
$
    L^{\infty}(\ProbNX)
$
and
$
    L^1(\ProbNX)
$
are quotient spaces
$
  \mathcal{L}^{\infty}(\ProbNX)
/
  \sim_{\mathbb{P}_X}
$
and
$
  \mathcal{L}^1(\ProbNX)
/
  \sim_{\mathbb{P}_X}
$,
respectively.

\begin{defn}{[Functor $\mathbf{L}$]}
\label{defn:fun_L}
A functor
$
\mathbf{L} : \Prob \to \Set
$
is defined by:
\begin{equation*}
\xymatrix@C=15 pt@R=30 pt{
    X
  &
    \ProbX
     \ar @{->}_{f^-} [d]
     \ar @{|->}^{\mathbf{L}} [rr]
  &&
    \mathbf{L}\ProbX
     \ar @{}^-{:=} @<-6pt> [r]
     \ar @{->}^{
       \mathbf{L} f^-
     } [d]
  &
    L^{\infty}(\ProbNX)
     \ar @{}^-{\ni} @<-6pt> [r]
  &
    [u]_{\sim_{\mathbb{P}_X}}
     \ar @{|->}^{
       \mathbf{L} f^-
     } @<-9pt> [d]
\\
    Y
      \ar @{->}^f [u]
  &
    \ProbY
     \ar @{|->}^{\mathbf{L}} [rr]
  &&
    \mathbf{L}\ProbY
     \ar @{}^-{:=} @<-6pt> [r]
  &
    L^{\infty}(\ProbNY)
     \ar @{}^-{\ni} @<-6pt> [r]
  &
    [u \circ f]_{\sim_{\mathbb{P}_Y}}
}
\end{equation*}
\end{defn}

\begin{thm}
\label{thm:CEf}
Let
$
f^-
$
be an arrow in
$
\Prob(
  \ProbX,
  \ProbY
)
$.
Then, for any
$
v \in
\mathcal{L}^{1}(\ProbNY)
$
there exists a
$
u \in
\mathcal{L}^{1}(\ProbNX)
$
such that
for every
$
A \in \Sigma_X
$
\begin{equation}
\label{eq:RN}
\int_A u \, d \mathbb{P}_X
  =
\int_{f^{-1}(A)} v \, d \mathbb{P}_Y.
\end{equation}
Moreover,
$u$
is determined uniquely up to 
$\mathbb{P}_X$-null sets.
In other words,
if there are two
$
u_1, u_2 \in
\mathcal{L}^{1}(\ProbNX)
$
both satisfying
(\ref{eq:RN}),
then
$
u_1
  \sim_{\mathbb{P}_X}
u_2
$.

We write a version of this 
$u$
by
$
E^{f^-}(v)
$,
and call it a
\newword{conditional expectation of $v$ along $f^-$}.
Therefore,
\begin{equation}
\label{eq:RN_E}
\int_A 
  E^{f^-}(v)
\, d \mathbb{P}_X
  =
\int_{f^{-1}(A)} v \, d \mathbb{P}_Y.
\end{equation}
\end{thm}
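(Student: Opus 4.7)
The plan is to reduce the statement to a direct application of the Radon--Nikodym theorem.

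First, I would define a set function $\nu : \Sigma_X \to \mathbb{R}$ by
\[
\nu(A) := \int_{f^{-1}(A)} v \, d\mathbb{P}_Y
\]
for $A \in \Sigma_X$, and verify that $\nu$ is a finite signed measure. Finiteness is immediate, since $|\nu(A)| \le \int_Y |v| \, d\mathbb{P}_Y < \infty$ because $v \in \mathcal{L}^1(\bar Y)$. Countable additivity follows from the fact that $f^{-1}$ preserves disjoint unions, combined with the dominated convergence theorem applied to partial sums of $\mathbb{1}_{f^{-1}(A_n)} v$ dominated by $|v|$.

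Next, I would verify absolute continuity $\nu \ll \mathbb{P}_X$. If $A \in \Sigma_X$ satisfies $\mathbb{P}_X(A) = 0$, then the hypothesis $\mathbb{P}_Y \circ f^{-1} \ll \mathbb{P}_X$ built into the definition of $\Prob(\bar X, \bar Y)$ forces $\mathbb{P}_Y(f^{-1}(A)) = 0$, hence $\nu(A) = \int_{f^{-1}(A)} v \, d\mathbb{P}_Y = 0$. By the Radon--Nikodym theorem, there exists a $\Sigma_X$-measurable $u$ with $\nu(A) = \int_A u \, d\mathbb{P}_X$ for every $A \in \Sigma_X$, which is exactly (\ref{eq:RN}). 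Integrability $u \in \mathcal{L}^1(\bar X)$ follows by taking Hahn decompositions of $\nu$ and noting $\int_X |u| \, d\mathbb{P}_X = |\nu|(X) \le \int_Y |v| \, d\mathbb{P}_Y < \infty$.

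For uniqueness, if $u_1, u_2 \in \mathcal{L}^1(\bar X)$ both satisfy (\ref{eq:RN}), then $\int_A (u_1 - u_2) \, d\mathbb{P}_X = 0$ for every $A \in \Sigma_X$. Testing against $A = \{u_1 > u_2\}$ and $A = \{u_1 < u_2\}$ (both in $\Sigma_X$ by measurability) forces $\mathbb{P}_X(u_1 \ne u_2) = 0$, i.e.\ $u_1 \sim_{\mathbb{P}_X} u_2$.

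There is no real obstacle here: the only subtle point is that the absolute continuity hypothesis $\mathbb{P}_Y \circ f^{-1} \ll \mathbb{P}_X$ is exactly what is needed to run Radon--Nikodym, so the theorem is really a bookkeeping statement asserting that the definition of $\Prob$ was arranged precisely to make conditional expectation along $f^-$ well-defined. The slight care required is to phrase everything at the level of $\mathcal{L}^1$ representatives rather than $L^1$ equivalence classes, which is why uniqueness is stated only up to $\mathbb{P}_X$-null sets.
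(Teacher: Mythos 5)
Your proof is correct: the paper itself gives no proof of this theorem (it defers to \cite{AR_2016_1}), and the Radon--Nikodym argument you give --- push $v\,d\mathbb{P}_Y$ forward to a finite signed measure $\nu$ on $\Sigma_X$, note that $\mathbb{P}_Y\circ f^{-1}\ll\mathbb{P}_X$ yields $\nu\ll\mathbb{P}_X$, and take $u=d\nu/d\mathbb{P}_X$ --- is precisely the standard construction the definition of $\Prob$ is designed to enable. The uniqueness step via the sets $\{u_1>u_2\}$ and $\{u_1<u_2\}$ is likewise the standard one, so nothing further is needed.
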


\begin{prop}
\label{prop:fun_E}
Let
$f^-$
and
$g^-$
be arrows in 
$\Prob$
like:
\begin{equation*}
\xymatrix{
  \ProbX
       \ar @{->}^{f^-} [r]
 &
  \ProbY
       \ar @{->}^{g^-} [r]
 &
  \ProbZ
}.
\end{equation*}
\begin{enumerate}
\item
For
$
u \in 
\mathcal{L}^1(\ProbNX)
$,
$
E^{Id_X^-}(u)
  \sim_{\mathbb{P}_X}
u
$.

\item
For
$
v_1, v_2
  \in
\mathcal{L}^{1}(\ProbNY)
$,
$
v_1
  \sim_{\mathbb{P}_Y}
v_2
$
implies
$
E^{f^-}(v_1)
  \sim_{\mathbb{P}_X}
E^{f^-}(v_2)
$.

\item
For
$
w
  \in
\mathcal{L}^{1}(\ProbNZ)
$,
$
E^{f^-}(E^{g^-}(w))
  \sim_{\mathbb{P}_X}
E^{g^- \circ f^-}(w)
$.
\end{enumerate}
\end{prop}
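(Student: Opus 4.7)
The plan is to prove all three parts uniformly via the uniqueness clause in Theorem~\ref{thm:CEf}: in each case I will exhibit an explicit element of $\mathcal{L}^1$ on the source probability space, verify that it satisfies the defining Radon--Nikodym-type identity (\ref{eq:RN_E}), and then invoke uniqueness.

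For part (1), since $Id_X^{-1}(A) = A$, the random variable $u$ itself trivially satisfies $\int_A u \, d\mathbb{P}_X = \int_{Id_X^{-1}(A)} u \, d\mathbb{P}_X$ for every $A \in \Sigma_X$. Hence $u$ is a valid version of $E^{Id_X^-}(u)$ and uniqueness gives the claim. For part (2), the assumption $v_1 \sim_{\mathbb{P}_Y} v_2$ yields $\int_{f^{-1}(A)} v_1 \, d\mathbb{P}_Y = \int_{f^{-1}(A)} v_2 \, d\mathbb{P}_Y$ for every $A \in \Sigma_X$; thus $E^{f^-}(v_1)$ also satisfies (\ref{eq:RN_E}) with $v_2$ on the right-hand side, and uniqueness again concludes.

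Part (3) is the substantive case. Set $v := E^{g^-}(w) \in \mathcal{L}^1(\ProbNY)$, which is well-defined by Theorem~\ref{thm:CEf}. For an arbitrary $A \in \Sigma_X$, I would chain the defining identities (\ref{eq:RN_E}) for $f^-$ (first step) and for $g^-$ applied to $B := f^{-1}(A) \in \Sigma_Y$ (second step):
\begin{align*}
\int_A E^{f^-}(E^{g^-}(w)) \, d\mathbb{P}_X
  &= \int_{f^{-1}(A)} E^{g^-}(w) \, d\mathbb{P}_Y \\
  &= \int_{g^{-1}(f^{-1}(A))} w \, d\mathbb{P}_Z \\
  &= \int_{(f \circ g)^{-1}(A)} w \, d\mathbb{P}_Z.
\end{align*}
Since composition in $\Prob$ reverses the direction of the underlying measurable maps --- so that $g^- \circ f^-$ is represented by $f \circ g \colon Z \to X$ --- the last line is precisely the right-hand side of (\ref{eq:RN_E}) for the arrow $g^- \circ f^-$ and integrand $w$. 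Therefore $E^{f^-}(E^{g^-}(w))$ is a valid version of $E^{g^- \circ f^-}(w)$, and uniqueness delivers the $\sim_{\mathbb{P}_X}$-equivalence.

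The main (and essentially only) obstacle is bookkeeping: one must track that arrows in $\Prob$ reverse the direction of their underlying functions, so that $g^- \circ f^-$ corresponds to $f \circ g$ rather than $g \circ f$, and one should verify in passing that absolute continuity is preserved under composition so that $g^- \circ f^-$ genuinely lies in $\Prob(\ProbX, \ProbZ)$. Beyond this, each step is a direct appeal to Theorem~\ref{thm:CEf} and there is no further analytic content.
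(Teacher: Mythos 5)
Your argument is correct: in each part you exhibit a candidate, check the defining identity (\ref{eq:RN_E}) on every $A \in \Sigma_X$ (using $g^{-1}(f^{-1}(A)) = (f\circ g)^{-1}(A)$ and the reversed composition in $\Prob$ for part (3)), and invoke the uniqueness clause of Theorem~\ref{thm:CEf}. The paper itself omits the proof and defers to the companion reference, but this is precisely the standard argument one would expect there, so there is nothing substantive to contrast.
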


\begin{defn}{[Functor $\mathcal{E}$]}
\label{defn:fun_E}
A functor
$
\mathcal{E} : \Prob^{op} \to \Set
$
is defined by:
\begin{equation*}
\xymatrix@C=15 pt@R=30 pt{
    X
  &
    \ProbX
     \ar @{->}_{f^-} [d]
     \ar @{|->}^{\mathcal{E}} [rr]
  &&
    \mathcal{E}\ProbX
     \ar @{}^-{:=} @<-6pt> [r]
  &
    L^{1}(\ProbNX)
     \ar @{}^-{\ni} @<-6pt> [r]
  &
    [E^{f^-}(v)]_{\sim_{\mathbb{P}_X}}
\\
    Y
      \ar @{->}^f [u]
  &
    \ProbY
     \ar @{|->}^{\mathcal{E}} [rr]
  &&
    \mathcal{E}\ProbY
     \ar @{}^-{:=} @<-6pt> [r]
     \ar @{->}_{
       \mathcal{E} f^-
     } [u]
  &
    L^{1}(\ProbNY)
     \ar @{}^-{\ni} @<-6pt> [r]
  &
    [v]_{\sim_{\mathbb{P}_Y}}
     \ar @{|->}_{
       \mathcal{E} f^-
     } @<9pt> [u]
}
\end{equation*}
We call
$\mathcal{E}$
a 
\newword{conditional expectation functor}.
\end{defn}

\begin{prop}
\label{prop:lin}
Let
$
f^-
  :
\ProbX
  \to
\ProbY
$
be a 
$\Prob$-arrow,
$
u, v
  \in
 \mathcal{L}^1(\ProbNY)
$
and
$
\alpha ,\beta
\in \mathbf{R}
$.
\begin{enumerate}
\item
    \noindent \textit{Linearity:}\;
$
E^{f^-} (
    \alpha u + \beta v
)
   \sim_{\mathbb{P}_X}
\alpha E^{f^-}(u) + \beta E^{f^-}(v) .
$

\item
    \noindent \textit{Positivity:}\;
$
E^{f^-} (v)
  \gtrsim_{\mathbb{P}_X}
0
$
if 
$
v
  \gtrsim_{\mathbb{P}_Y}
0
$.

\end{enumerate}

\end{prop}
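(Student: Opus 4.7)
The plan is to reduce both parts to the defining integral identity of Theorem \ref{thm:CEf} and invoke its uniqueness clause.

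For linearity, I would verify that the candidate $\alpha E^{f^-}(u) + \beta E^{f^-}(v)$ satisfies the Radon--Nikodym-type identity (\ref{eq:RN_E}) for $\alpha u + \beta v$. Concretely, for an arbitrary $A \in \Sigma_X$, splitting the integral by linearity and applying (\ref{eq:RN_E}) to $u$ and $v$ separately yields
\begin{equation*}
\int_A \bigl(\alpha E^{f^-}(u) + \beta E^{f^-}(v)\bigr) d\mathbb{P}_X
  = \alpha \int_{f^{-1}(A)} u\, d\mathbb{P}_Y + \beta \int_{f^{-1}(A)} v\, d\mathbb{P}_Y
  = \int_{f^{-1}(A)} (\alpha u + \beta v)\, d\mathbb{P}_Y.
\end{equation*}
Since the same identity holds for $E^{f^-}(\alpha u + \beta v)$ by definition, the uniqueness-up-to-$\mathbb{P}_X$-null-sets part of Theorem \ref{thm:CEf} forces the desired $\sim_{\mathbb{P}_X}$ equality.

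For positivity, I would test on the measurable set $A := \{E^{f^-}(v) < 0\} \in \Sigma_X$, which we want to show is $\mathbb{P}_X$-null. By (\ref{eq:RN_E}),
\begin{equation*}
\int_A E^{f^-}(v)\, d\mathbb{P}_X = \int_{f^{-1}(A)} v\, d\mathbb{P}_Y.
\end{equation*}
The right-hand side is nonnegative because $v \gtrsim_{\mathbb{P}_Y} 0$, while the left-hand side is the integral of a strictly negative integrand over $A$, hence nonpositive, with equality only when $\mathbb{P}_X(A) = 0$. Combining these forces $\mathbb{P}_X(A) = 0$, which is exactly $E^{f^-}(v) \gtrsim_{\mathbb{P}_X} 0$.

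No serious obstacle is anticipated; the whole proposition is a direct exploitation of Theorem \ref{thm:CEf}. The only subtlety worth flagging is measurability of the set $A$ in the positivity argument, which is automatic since $E^{f^-}(v)$ is a version in $\mathcal{L}^1(\ProbNX)$ and hence Borel measurable, and the mild remark that the linearity conclusion really lives at the level of equivalence classes, so the statement is intrinsically about $\mathcal{E}f^-$ acting on $L^1(\ProbNY)$ rather than on representatives.
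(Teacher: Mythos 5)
Your proof is correct and is the standard argument: both parts reduce to the defining identity (\ref{eq:RN_E}) plus the uniqueness clause of Theorem \ref{thm:CEf} (for linearity) and the observation that a strictly negative integrand over a set of positive measure would contradict the nonnegativity of $\int_{f^{-1}(A)} v\, d\mathbb{P}_Y$ (for positivity). The paper itself gives no proof of this proposition, deferring to \cite{AR_2016_1}, but your argument is exactly the one that reference supplies, and your closing remarks on measurability and on the statement living at the level of $\sim_{\mathbb{P}_X}$-classes are apt.
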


\begin{thm}
\label{thm:measu}
Let 
$
f^-
  :
\ProbX
  \to
\ProbY
$
be a
$\Prob$-arrow
,
$
u 
  \in
\mathcal{L}^{1}(\ProbNY)
$
and
$
w 
  \in
\mathcal{L}^{\infty}(\ProbNX)
$.
Then
we have
\begin{equation}
E^{f^-}(
  (w \circ f) \cdot u
)
  \sim_{\mathbb{P}_X}
w 
  \cdot
E^{f^-}(u) .
\end{equation}

\end{thm}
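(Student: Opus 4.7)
The strategy is to exploit the characterization of $E^{f^-}$ given in Theorem \ref{thm:CEf}: since conditional expectations are unique up to $\sim_{\mathbb{P}_X}$, it suffices to prove that the candidate $w \cdot E^{f^-}(u)$ satisfies the defining integral equation for $E^{f^-}((w \circ f) \cdot u)$. That is, I would verify that for every $A \in \Sigma_X$,
\begin{equation*}
\int_A w \cdot E^{f^-}(u) \, d\mathbb{P}_X
  =
\int_{f^{-1}(A)} (w \circ f) \cdot u \, d\mathbb{P}_Y.
\end{equation*}

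Before doing so, I would check the standing integrability hypotheses. Since $w \in \mathcal{L}^\infty(\bar{X})$ with some essential bound $M$, and $\mathbb{P}_Y \circ f^{-1} \ll \mathbb{P}_X$, the composition $w \circ f$ lies in $\mathcal{L}^\infty(\bar{Y})$ with the same essential bound, so $(w \circ f) \cdot u \in \mathcal{L}^1(\bar{Y})$; likewise $w \cdot E^{f^-}(u) \in \mathcal{L}^1(\bar{X})$. Thus both conditional expectations are well defined and Theorem \ref{thm:CEf} applies.

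To establish the displayed identity I would use the standard three-step measure-theoretic induction on $w$. First, for $w = \mathbb{1}_B$ with $B \in \Sigma_X$, observe $w \circ f = \mathbb{1}_{f^{-1}(B)}$, and both sides reduce to $\int_{f^{-1}(A \cap B)} u \, d\mathbb{P}_Y$ via equation \eqref{eq:RN_E} applied to $A \cap B$. Second, by linearity on both sides (using Proposition \ref{prop:lin}(1) on the right and linearity of integrals on the left), the identity extends to $\Sigma_X$-simple functions $w$. Third, for a general non-negative $w \in \mathcal{L}^\infty(\bar{X})$, choose an increasing sequence of simple functions $w_n \uparrow w$; then $w_n \circ f \uparrow w \circ f$ as well, and monotone/dominated convergence (with dominating function $M |u| \in \mathcal{L}^1(\bar{Y})$ on the right, and $M |E^{f^-}(u)| \in \mathcal{L}^1(\bar{X})$ on the left, invoking Proposition \ref{prop:lin}(2) applied to $\pm u$ to bound $|E^{f^-}(u)|$) lets me pass to the limit. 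Splitting a general $w$ into positive and negative parts completes the argument, and uniqueness in Theorem \ref{thm:CEf} yields $E^{f^-}((w \circ f) \cdot u) \sim_{\mathbb{P}_X} w \cdot E^{f^-}(u)$.

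The only genuinely delicate point is the passage to the limit in the third step, since I must ensure the dominating functions are legitimately in $\mathcal{L}^1$ on the respective spaces; this is where Proposition \ref{prop:lin}(2) is doing real work, as it guarantees $|E^{f^-}(u)| \lesssim_{\mathbb{P}_X} E^{f^-}(|u|)$ and hence that $E^{f^-}(u)$ is integrable with $\int |E^{f^-}(u)| d\mathbb{P}_X \le \int |u| d\mathbb{P}_Y$ (by taking $A = X$ in \eqref{eq:RN_E} applied to $|u|$). Apart from this, the proof is routine: the categorical wrapping changes the notation but not the substance of the classical ``taking out what is known'' identity.
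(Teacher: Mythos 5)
Your proof is correct. The paper itself contains no proof of Theorem \ref{thm:measu} (Section \ref{sec:cat_prob} defers all proofs to the companion paper \cite{AR_2016_1}), but your argument --- verifying the defining integral identity of Theorem \ref{thm:CEf} for the candidate $w \cdot E^{f^-}(u)$ by the standard indicator/simple-function/limit induction, with the key observation that $\mathbb{P}_Y \circ f^{-1} \ll \mathbb{P}_X$ transfers the essential bound of $w$ to $w \circ f$ so that $(w\circ f)\cdot u \in \mathcal{L}^1(\ProbNY)$, and with dominated convergence justified on both sides --- is the standard route for this ``taking out what is known'' identity and is sound.
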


\section{Monetary Value Measures}
\label{sec:value_measure}

A monetary value measure is defined as a presheaf 
on 
$\Prob$.

\begin{defn}{[Monetary Value Measures]}
\label{defn:value_measure}
A 
 \newword{
  monetary value measure
}
is a contravariant functor
\begin{equation*}
  \Phi : 
    \Prob^{op}
  \to
    \Set
\end{equation*}
defined by
\begin{equation*}
\xymatrix@C=15 pt@R=30 pt{
    X
  &
    \ProbX
     \ar @{->}_{f^-} [d]
     \ar @{|->}^{\Phi} [rr]
  &&
    \Phi\ProbX
     \ar @{}^-{:=} @<-6pt> [r]
  &
    L^{1}(\ProbNX)
     \ar @{}^-{\ni} @<-6pt> [r]
  &
    [\varphi^{f^-}(v)]_{\sim_{\mathbb{P}_X}}
\\
    Y
      \ar @{->}^f [u]
  &
    \ProbY
     \ar @{|->}^{\Phi} [rr]
  &&
    \Phi\ProbY
     \ar @{}^-{:=} @<-6pt> [r]
     \ar @{->}_{
       \Phi f^-
     } [u]
  &
    L^{1}(\ProbNY)
     \ar @{}^-{\ni} @<-6pt> [r]
  &
    [v]_{\sim_{\mathbb{P}_Y}}
     \ar @{|->}_{
       \Phi f^-
     } @<9pt> [u]
}
\end{equation*}
where
$\varphi^{f^-}$
satisfies
\begin{enumerate}
  \item
    \noindent \textit{Cash invariance:}\;
      $
      (\forall v \in \mathcal{L}^{\infty}(\ProbNY) )
      (\forall u \in \mathcal{L}^{\infty}(\ProbNX) )
      $

      $
  \varphi^{f^-}(v + u \circ f)
     \sim_{\mathbb{P}_X} 
  \varphi^{f^-}(v) + u
     $,

  \item
    \noindent \textit{Monotonicity:}\;
      $
      (\forall v_1 \in \mathcal{L}^{\infty}(\ProbNY) )
      (\forall v_2 \in \mathcal{L}^{\infty}(\ProbNY) )
      $

      $
   v_1
       \lesssim_{\mathbb{P}_Y}
   v_2 
       \; \Rightarrow \;
   \varphi^{f^-}(v_1)
       \lesssim_{\mathbb{P}_X}
   \varphi^{f^-}(v_2)
      $,

  \item
    \noindent \textit{Normalization:}\;
      $
      \varphi^{f^-}(0_Y)
         \sim_{\mathbb{P}_X} 
      0_X
      $
      if
      $f^-$
      is measure-preserving,

  \item
      $
         v \in \mathcal{L}^{\infty}(\ProbY)
      $
          implies
      $
         \varphi^{f^-}(v) \in \mathcal{L}^{\infty}(\ProbX)
      $
      if
      $f^-$
      is measure-preserving.

\end{enumerate}
We sometimes write
$
\Phi[\varphi^{\cdot}]
$
for
$
\Phi
$
for explicitly noting that
arrows mapped by 
$\Phi$
are determined by
$
\varphi^{\cdot}
$.
\end{defn}

At this point,
we do not require the monetary value measures to satisfy
familiar conditions
such as
concavity or positive homogeneity.
Instead of doing so, we want to see what kind of properties
are deduced from this minimal setting.

\lspace
The most crucial point of
Definition \ref{defn:value_measure}
is that 
$\varphi$
does not move only 
in the direction of time
but also moves over several absolutely continuous probability measures
\textit{internally}.
This means we have a possibility to develop risk measures including ambiguity
within this formulation.

Another key point of
Definition \ref{defn:value_measure}
is that 
$\varphi$
is a contravariant functor.
So, for any pair of arrows
$
  \xymatrix@C=15 pt@R=15 pt{
     \ProbX
        \ar @{->}_{f^-} [r]
   &
     \ProbY
        \ar @{->}_{g^-} [r]
   &
     \ProbZ
  }
$
in
$\Prob$,
we have
\begin{equation}
\label{eq:cv_functor}
\Phi Id_X^-
  =
Id_{L^1(\ProbNX)}
\;\; \textrm{and} \;\;
\Phi f^-
  \circ
\Phi g^-
  =
\Phi (g^- \circ f^-).
\end{equation}

As an example of monetary value measures,
we will introduce a notion of 
entropic value measures 
that depend on conditional expectations
$
E^{f^-}(v)
$
of 
$v$
along
$f^-$.

Before introducing entropic value measures,
we need the following lemma.

\begin{lem}
\label{lem:mpInfty}
Let
$
f^-
  :
\ProbX \to \ProbY
$
be a measure-preserving arrow in
$\Prob$.
Then, 
$
v
  \in \mathcal{L}^{\infty}(\ProbNY)
$
implies
$
E^{f^-}(v)
  \in \mathcal{L}^{\infty}(\ProbNX)
$.
\end{lem}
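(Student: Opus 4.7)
The plan is to bound $E^{f^-}(v)$ pointwise (almost surely) by $M := \mathbb{P}_Y\text{-}\esssup_{y \in Y} \abs{v(y)}$, which is finite by hypothesis, and then conclude that $E^{f^-}(v) \in \mathcal{L}^\infty(\ProbNX)$. The key input is the measure-preserving assumption $\mathbb{P}_Y \circ f^{-1} = \mathbb{P}_X$, which I will use to convert mass on $Y$-side integrals back to $X$-side integrals without loss.

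First I would record a preliminary observation: for any real constant $c$, viewed as a random variable, $E^{f^-}(c) \sim_{\mathbb{P}_X} c$. This follows immediately by testing the defining equation (\ref{eq:RN_E}) on $A \in \Sigma_X$ and using $\mathbb{P}_Y(f^{-1}(A)) = \mathbb{P}_X(A)$. Once this is in hand, the main argument is a standard comparison: since $-M \lesssim_{\mathbb{P}_Y} v \lesssim_{\mathbb{P}_Y} M$, applying positivity (Proposition \ref{prop:lin}(2)) to the random variables $M - v$ and $v + M$ together with the linearity statement in Proposition \ref{prop:lin}(1) and the computation $E^{f^-}(\pm M) \sim_{\mathbb{P}_X} \pm M$ yields $-M \lesssim_{\mathbb{P}_X} E^{f^-}(v) \lesssim_{\mathbb{P}_X} M$.

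Alternatively, if one prefers to bypass the linearity/positivity machinery, the same conclusion can be reached directly from (\ref{eq:RN_E}) by contradiction: setting $A := \{x \in X : E^{f^-}(v)(x) > M\}$, if $\mathbb{P}_X(A) > 0$ then the left-hand side of (\ref{eq:RN_E}) exceeds $M \mathbb{P}_X(A)$, while the right-hand side is bounded above by $M \mathbb{P}_Y(f^{-1}(A)) = M \mathbb{P}_X(A)$, a contradiction. The symmetric argument with $\{x : E^{f^-}(v)(x) < -M\}$ takes care of the lower bound.

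The only subtle point is that the analogous statement would fail without the measure-preserving hypothesis: in general $\mathbb{P}_Y(f^{-1}(A)) \leq C\,\mathbb{P}_X(A)$ need not hold with $C = 1$ (only absolute continuity is assumed in Definition \ref{defn:cat_prob}), so $E^{f^-}(v)$ could inflate an $\mathcal{L}^\infty$-bound into a merely $\mathcal{L}^1$-control. Hence the whole weight of the lemma rests on this single identification, and once it is isolated the rest of the argument is routine.
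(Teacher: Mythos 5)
Your main argument is exactly the paper's proof: bound $v$ by $\pm M$, apply positivity and linearity from Proposition \ref{prop:lin}, and use $E^{f^-}(M)\sim_{\mathbb{P}_X} M E^{f^-}(1_Y)\sim_{\mathbb{P}_X} M$, which is where the measure-preserving hypothesis enters. The alternative contradiction argument via (\ref{eq:RN_E}) is also sound, but the primary route matches the paper's reasoning step for step, so there is nothing to add.
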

\begin{proof}
Since
$
v
  \in \mathcal{L}^{\infty}(\ProbNY)
$,
there exists a non-negative
$M \ge 0$
such that
$
-M
  \lesssim_{\mathbb{P}_Y}
v
  \lesssim_{\mathbb{P}_Y}
M
$.
Then,
by Proposition \ref{prop:lin},
\begin{equation*}
0
  \lesssim_{\mathbb{P}_X}
E^{f^-}(M - v)
  \sim_{\mathbb{P}_X}
E^{f^-}(M)
  -
E^{f^-}(v) .
\end{equation*}
On the other hand,
we have
\begin{equation*}
E^{f^-}(M)
  \sim_{\mathbb{P}_X}
M E^{f^-}(1_Y)
  \sim_{\mathbb{P}_X}
M
\end{equation*}
since 
$f$
is measure preserving.
Therefore, we obtain
$
E^{f^-}(v)
  \lesssim_{\mathbb{P}_X}
M
$.
Similarly, we have
$
-M
  \lesssim_{\mathbb{P}_X}
E^{f^-}(v)
$.
So we get
$
E^{f^-}(v)
  \in \mathcal{L}^{\infty}(\ProbNX)
$.
\end{proof}

\begin{prop}{[Entropic Value Measures]}
\label{prop:EVM}
Let
$
f^-
  :
\ProbX
  \to
\ProbY
$
be
 a 
$\Prob$-arrow,
and
$\lambda$
be a positive real number.
Define
a function
$
\varphi^{f^-}
  :
L^1(\ProbNY)
  \to
L^1(\ProbNX)
$
by 
\begin{equation}
\varphi^{f^-}(v)
  :=
\lambda^{-1}
\log 
  E^{f^-}
  (e^{\lambda v}),
\;
(
\forall v \in
L^1(\ProbNY)
).
\end{equation}
Then,
$
\Phi := \Phi[\varphi^{\cdot}]
$
is a monetary value measure.
We call this 
$\Phi$
an \newword{entropic value measure}.
\end{prop}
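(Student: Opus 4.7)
The plan is to verify, in turn, the four pointwise axioms on $\varphi^{f^-}$ from Definition~\ref{defn:value_measure} together with the contravariant functoriality in (\ref{eq:cv_functor}). Every verification reduces to one of three ingredients already at our disposal: the pull-out formula of Theorem~\ref{thm:measu}, the linearity/positivity in Proposition~\ref{prop:lin}, and the composition rule Proposition~\ref{prop:fun_E}(3); the monotonicity of $\exp$ and $\log$ (both valid because $\lambda>0$) handles the remaining algebra.

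First I would handle cash invariance. For $v\in\mathcal{L}^{\infty}(\ProbY)$ and $u\in\mathcal{L}^{\infty}(\ProbX)$, the key identity is $e^{\lambda(v+u\circ f)}=(e^{\lambda u}\circ f)\cdot e^{\lambda v}$; applying Theorem~\ref{thm:measu} with $w:=e^{\lambda u}\in\mathcal{L}^{\infty}(\ProbX)$ pulls $e^{\lambda u}$ out of the conditional expectation, and taking $\lambda^{-1}\log$ produces $\varphi^{f^-}(v)+u$. Monotonicity is then immediate: $v_1\lesssim_{\mathbb{P}_Y}v_2$ gives $e^{\lambda v_1}\lesssim_{\mathbb{P}_Y}e^{\lambda v_2}$, and taking $E^{f^-}$ of the difference propagates the inequality via Proposition~\ref{prop:lin}; the final $\lambda^{-1}\log$ preserves it. Normalization reduces to $E^{f^-}(1_Y)\sim_{\mathbb{P}_X}1_X$ when $f^-$ is measure-preserving, which reads off (\ref{eq:RN_E}) since $\mathbb{P}_Y\circ f^{-1}=\mathbb{P}_X$. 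Axiom (4) then follows by applying Lemma~\ref{lem:mpInfty} to $e^{\lambda v}\in\mathcal{L}^{\infty}(\ProbY)$ and noting that $\log$ preserves boundedness away from $0$ and $\infty$.

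For the functorial properties, the identity law uses Proposition~\ref{prop:fun_E}(1): $\varphi^{Id_X^-}(v)=\lambda^{-1}\log E^{Id_X^-}(e^{\lambda v})\sim_{\mathbb{P}_X}\lambda^{-1}\log e^{\lambda v}=v$. For composition, given $f^-:\ProbX\to\ProbY$ and $g^-:\ProbY\to\ProbZ$, the central observation is the tautology $e^{\lambda\varphi^{g^-}(w)}=E^{g^-}(e^{\lambda w})$, which is just the definition of $\varphi^{g^-}$ exponentiated. Combining this with Proposition~\ref{prop:fun_E}(3) gives
\[
\varphi^{f^-}\bigl(\varphi^{g^-}(w)\bigr)=\lambda^{-1}\log E^{f^-}\bigl(E^{g^-}(e^{\lambda w})\bigr)\sim_{\mathbb{P}_X}\lambda^{-1}\log E^{g^-\circ f^-}(e^{\lambda w})=\varphi^{g^-\circ f^-}(w),
\]
as required for $\Phi f^-\circ\Phi g^-=\Phi(g^-\circ f^-)$.

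The only mildly delicate point is well-definedness: for $\varphi^{f^-}(v)$ to make sense we need $e^{\lambda v}\in\mathcal{L}^{1}(\ProbY)$ and $E^{f^-}(e^{\lambda v})>0$ holding $\mathbb{P}_X$-a.s.\ so that $\log$ is applicable. Both conditions are in hand on $\mathcal{L}^{\infty}(\ProbY)$, which is the ambient space for all four axioms: boundedness of $v$ by some $M$ gives boundedness of $e^{\lambda v}$ and the strictly positive lower bound $e^{-\lambda M}$, and positivity in Proposition~\ref{prop:lin} propagates this lower bound through $E^{f^-}$. Once well-definedness is secured, the main obstacle is simply organizing the chain $\exp\to E^{f^-}\to\log$ cleanly so that each of the three tools is applied at the right step; everything else is direct bookkeeping.
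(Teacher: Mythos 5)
Your proof is correct and follows essentially the same route as the paper: cash invariance via Theorem~\ref{thm:measu}, monotonicity via Proposition~\ref{prop:lin}, normalization via $E^{f^-}(1_Y)\sim_{\mathbb{P}_X}1_X$, axiom (4) via Lemma~\ref{lem:mpInfty}, and functoriality via Proposition~\ref{prop:fun_E}. Your closing remark on well-definedness (integrability and strict positivity of $E^{f^-}(e^{\lambda v})$ so that $\log$ applies) is a point the paper's proof passes over silently, and is a worthwhile addition.
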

\begin{proof}
Let
$
  \xymatrix@C=15 pt@R=15 pt{
     \ProbX
        \ar @{->}^{f^-} [r]
   &
     \ProbY
        \ar @{->}^{g^-} [r]
   &
     \ProbZ
  }
$
be arrows in
$\Prob$.
In order to show that 
$\Phi$
becomes a contravariant functor,
we need to check three points:
$
\varphi^{Id_Z^-}(w)
  \sim_{\mathbb{P}_Z}
w
$,
$
\varphi^{f-}(
  \varphi^{g^-}(
     w
  )
)
  \sim_{\mathbb{P}_X}
\varphi^{g^- \circ f^-}(
 w
)
$,
and that
$
w_1
  \sim_{\mathbb{P}_Z}
w_2
$
implies
$
\varphi^{g^-}(
  w_1
)
  \sim_{\mathbb{P}_Y}
\varphi^{g^-}(
  w_2
)
$
for every
$
w, w_1, w_2
  \in
\mathcal{L}^1(\ProbZ)
$.
But, they are straightforward consequences of
Proposition \ref{prop:fun_E}.
So,
we forward to check if
$
\varphi^{f^-}
$
satisfies the four conditions of
Definition \ref{defn:value_measure}.

Firstly, we show that
$
  \varphi^{f^-}(v + u \circ f)
     \sim_{\mathbb{P}_X} 
  \varphi^{f^-}(v) + u
$
for
$
v \in \mathcal{L}^{\infty}(\ProbNY) 
$
and
$
u \in \mathcal{L}^{\infty}(\ProbNX) 
$.
But by
Theorem \ref{thm:measu},
we have
\begin{align*}
  \varphi^{f^-}(v + u \circ f)
&=
  \lambda^{-1} \log
  E^{f^-}(e^{\lambda (v + u \circ f)})
\\&=
  \lambda^{-1} \log
  E^{f^-}\big(
    e^{\lambda v}
  \cdot
    ((e^{\lambda u}) \circ f)
  \big)
\\&\sim_{\mathbb{P}_X} 
  \lambda^{-1} \log
  \Big(
    e^{\lambda u}
  \cdot
    E^{f^-}(
      e^{\lambda v}
    )
  \Big)
\\&=
  u
    +
  \varphi^{f^-}(v) .
\end{align*}

Secondly, we show that
$
   v_1
       \lesssim_{\mathbb{P}_Y}
   v_2 
$
implies
$
   \varphi^{f^-}(v_1)
       \lesssim_{\mathbb{P}_X}
   \varphi^{f^-}(v_2)
$
for
$
       v_1, v_2 \in \mathcal{L}^{\infty}(\ProbNY) 
$.
But this comes from
Proposition \ref{prop:lin}.

Thirdly, we show that
$
      \varphi^{f^-}(0_Y)
         \sim_{\mathbb{P}_X} 
      0_X
$
if
$f^-$
is measure-preserving.
But this is straightforward like the following:
\begin{equation*}
\varphi^{f^-}(0_Y)
  =
\lambda^{-1} \log
E^{f^-}(e^{\lambda 0_Y})
  =
\lambda^{-1} \log
E^{f^-}(1_Y)
  \sim_{\mathbb{P}_X} 
\lambda^{-1} \log
1_X
  =
0_X.
\end{equation*}

Lastly,
we need to show that
      $
         v \in \mathcal{L}^{\infty}(\ProbY)
      $
          implies
      $
         \varphi^{f^-}(v) \in \mathcal{L}^{\infty}(\ProbX)
      $
when
      $f^-$
      is measure-preserving.
But this comes from 
Lemma
\ref{lem:mpInfty}.
\qedhere
\end{proof}

Here are some properties of monetary value measures.

\begin{thm}
\label{thm:1}
Let 
$
\Phi = \Phi[\varphi^{\cdot}]
   : \Prob^{op} \to \Set
$
be a monetary value measure,
 and
$
  \xymatrix@C=15 pt@R=15 pt{
     \ProbX
        \ar @{->}^{f^-} [r]
   &
     \ProbY
        \ar @{->}^{g^-} [r]
   &
     \ProbZ
  }
$
be 
arrows in
$\Prob$.
\begin{enumerate}
\item
If 
$f^-$
is measure-preserving,
we have \;
$
   \Phi f^- 
\circ
   L f^-
=
   Id_{ L \ProbX }
$.

\item
    \noindent \textit{Idempotence:}\;
If 
$f^-$
is measure-preserving,
we have 

$
  \Phi f^-
\circ
  L f^-
\circ
  \Phi f^-
    = 
  \Phi f^-
$.

\item
    \noindent \textit{Local property:}\;
$
(\forall v_1 \in \mathcal{L}^{\infty}(\ProbY))
(\forall v_2 \in \mathcal{L}^{\infty}(\ProbY))
(\forall A \in \Sigma_X)
$
$
     \Phi f^- \big[
        1_{f^{-1}(A)} v_1
            +
        1_{f^{-1}(A^c)} v_2
     \big]_{\sim_{\mathbb{P}_Y}}
  =
     [ 1_{f^{-1}(A)} ]_{\sim_{\mathbb{P}_X}}
     \Phi f [v_1]_{\sim_{\mathbb{P}_Y}}
  +
     [ 1_{f^{-1}(A^c)} ]_{\sim_{\mathbb{P}_X}}
     \Phi f [v_2]_{\sim_{\mathbb{P}_Y}}
$.

\item
    \noindent \textit{Dynamic programming principle:}\;
If 
$g^-$
is measure-preserving,

$
  \varphi^{g^- \circ f^-} (w)
  =
  \varphi^{g^- \circ f^-} (
    \varphi^{g^-} (w)
        \circ
    g
  )
$
for
$
w
  \in
\mathcal{L}^{\infty}(\ProbZ)
$.

\item
    \noindent \textit{Time consistency:}\;
$
(\forall w_1 \in \mathcal{L}^{\infty}(\ProbZ))
(\forall w_2 \in \mathcal{L}^{\infty}(\ProbZ))
$

$
     \varphi^{g^-}(w_1)
   \lesssim_{\mathbb{P}_Y}
     \varphi^{g^-}(w_2)
\; \Rightarrow \;
     \varphi^{g^- \circ f^-}(w_1)
   \lesssim_{\mathbb{P}_X}
     \varphi^{g^- \circ f^-}(w_2)
$.

\end{enumerate}
\end{thm}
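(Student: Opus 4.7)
The plan is to reduce each of the five claims to the axioms of Definition 2.1 together with the contravariant functoriality $\Phi(g^- \circ f^-) = \Phi f^- \circ \Phi g^-$. Four of the five parts fall out almost immediately, and only the local property (3) requires a genuine construction.

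First I would dispatch (1) by applying cash invariance with $v = 0_Y$ and an arbitrary $u \in \mathcal{L}^{\infty}(\bar{X})$: this yields $\varphi^{f^-}(u \circ f) \sim_{\mathbb{P}_X} \varphi^{f^-}(0_Y) + u$, and the normalization axiom (available because $f^-$ is measure-preserving) collapses the first term to $0_X$, giving exactly $\Phi f^- \circ \mathbf{L} f^- = Id_{\mathbf{L}\bar{X}}$. Claim (2) is then immediate: axiom (4) places $\Phi f^-(v)$ inside $\mathbf{L}\bar{X}$, so (1) applied to $\Phi f^-(v)$ yields the idempotence identity.

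For (4) and (5) I would use functoriality to write $\varphi^{g^- \circ f^-}(w) \sim_{\mathbb{P}_X} \varphi^{f^-}(\varphi^{g^-}(w))$. Time consistency (5) is then the monotonicity axiom applied to $\varphi^{f^-}$. For the dynamic programming principle (4), the task reduces to verifying $\varphi^{g^-}(\varphi^{g^-}(w) \circ g) \sim_{\mathbb{P}_Y} \varphi^{g^-}(w)$. Since $g^-$ is measure-preserving, axiom (4) places $\varphi^{g^-}(w)$ in $\mathcal{L}^{\infty}(\bar{Y})$, and cash invariance for $\varphi^{g^-}$ applied with $v = 0_Z$ and $u = \varphi^{g^-}(w)$, combined with normalization, delivers the identity; applying $\varphi^{f^-}$ to both sides and invoking functoriality once more closes (4).

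The main obstacle is the local property (3). My plan is to sandwich the mixed variable $v := 1_{f^{-1}(A)} v_1 + 1_{f^{-1}(A^c)} v_2$ between translates of $v_1$ by $f$-pullbacks of indicator functions. Taking $M \ge 0$ large enough that $\abs{v_1}, \abs{v_2} \lesssim_{\mathbb{P}_Y} M$, one obtains $v_1 - 2M(1_{A^c} \circ f) \lesssim_{\mathbb{P}_Y} v \lesssim_{\mathbb{P}_Y} v_1 + 2M(1_{A^c} \circ f)$ by a case split on $f^{-1}(A)$ versus $f^{-1}(A^c)$. Since $1_{A^c} \in \mathcal{L}^{\infty}(\bar{X})$, applying monotonicity followed by cash invariance for $\varphi^{f^-}$ sandwiches $\varphi^{f^-}(v)$ between $\varphi^{f^-}(v_1) \pm 2M \cdot 1_{A^c}$; multiplying through by the nonnegative function $1_A$ annihilates the $1_{A^c}$ terms and gives $1_A \cdot \varphi^{f^-}(v) \sim_{\mathbb{P}_X} 1_A \cdot \varphi^{f^-}(v_1)$. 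The symmetric argument with $A$ and $A^c$ swapped gives $1_{A^c} \cdot \varphi^{f^-}(v) \sim_{\mathbb{P}_X} 1_{A^c} \cdot \varphi^{f^-}(v_2)$, and adding these identities yields the stated formula. The delicate point is recognizing $1_{f^{-1}(A)} = 1_A \circ f$ with $1_A \in \mathcal{L}^{\infty}(\bar{X})$, since this is exactly the form cash invariance requires.
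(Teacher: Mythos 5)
Your proposal is correct and follows essentially the same route as the paper: parts (1), (2), (4) and (5) are handled exactly as in the paper via cash invariance, normalization, contravariant functoriality and monotonicity, and part (3) rests on the same sandwich--monotonicity--cash-invariance argument followed by multiplication by the indicator to annihilate the $1_{A^c}$ error term. The only cosmetic difference is that the paper first isolates the localization lemma $1_A\,\varphi^{f^-}(v)\sim_{\mathbb{P}_X}1_A\,\varphi^{f^-}(1_{f^{-1}(A)}v)$ and applies it twice, whereas you sandwich the mixed variable directly against $v_1$ (resp.\ $v_2$); the two organizations amount to the same estimate.
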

\begin{proof}
\begin{enumerate}
\item
For
$
  u \in \mathcal{L}^{\infty}(\ProbNX)
$,
$
     \Phi f^- (
        L f^- [
          u
        ]_{\sim_{\mathbb{P}_X}}
     )
=
     \big[
        \varphi^{f^-}(u \circ f)
     \big]_{\sim_{\mathbb{P}_X}}
$
But,
by 
cash invariance
and 
normalization,
we have
$
\varphi^{f^-}(u \circ f)
  =
\varphi^{f^-}(0_Y + (u \circ f))
  \sim_{\mathbb{P}_X}
\varphi^{f^-}(0_Y) + u
  \sim_{\mathbb{P}_X}
0_X + u
  =
u
$.

\item
Immediate by (1).

\item
First, we show that
for any
$
A \in \Sigma_X
$
and
$
v 
  \in
\mathcal{L}^{\infty}(\ProbNY)
$,
\begin{equation}
\label{eq:p51}
1_A \varphi^{f^-}(v)
    \sim_{\mathbb{P}_X}
1_A \varphi^{f^-}(1_{f^{-1}(A)} v) .
\end{equation}
Since
$
v \in 
\mathcal{L}^{\infty}(\ProbNY)
$,
for every
$
y \in Y
$
we have
$
\abs{v(y)}
   \lesssim_{\mathbb{P}_Y}
\norm{v}_{\mathcal{L}^{\infty}(\ProbNY)}
$.
Therefore,
\begin{equation*}
1_{f^{-1}(A)}
  v
-
1_{f^{-1}(A^c)}
  \norm{v}_{\mathcal{L}^{\infty}(\ProbNY)}
      \lesssim_{\mathbb{P}_Y}
1_{f^{-1}(A)}
  v
+
1_{f^{-1}(A^c)}
  v
      \lesssim_{\mathbb{P}_Y}
1_{f^{-1}(A)}
  v
+
1_{f^{-1}(A^c)}
  \norm{v}_{\mathcal{L}^{\infty}(\ProbNY)} .
\end{equation*}
Then
noting that
$
1_A \circ f = 1_{f^{-1}(A)}
$,
we have the following sequence of equations
by cash invariance and monotonicity. 
\begin{align*}
  \varphi^{f^-} (
     1_{f^{-1}(A)}
     v
  )
 -
  \norm{v}_{\mathcal{L}^{\infty}(\ProbNY)}
  1_{A^c}
& \sim_{\mathbb{P}_X}
  \varphi^{f^-} (
     1_{f^{-1}(A)}
     v
  -
     (\norm{v}_{\mathcal{L}^{\infty}(\ProbNY)}
     1_{A^c})
         \circ
     f
  )
\\&=
  \varphi^{f^-} (
     1_{f^{-1}(A)}
     v
  -
     1_{f^{-1}(A^c)}
     \norm{v}_{\mathcal{L}^{\infty}(\ProbNY)}
  )
\\& \lesssim_{\mathbb{P}_X}
  \varphi^{f^-} (v)
\\& \lesssim_{\mathbb{P}_X}
  \varphi^{f^-} (
     1_{f^{-1}(A)}
     v
  +
     1_{f^{-1}(A^c)}
     \norm{v}_{\mathcal{L}^{\infty}(\ProbNY)}
  )
\\&=
  \varphi^{f^-} (
     1_{f^{-1}(A)}
     v
  +
     (\norm{v}_{\mathcal{L}^{\infty}(\ProbNY)}
     1_{A^c})
         \circ
     f
  )
\\& \sim_{\mathbb{P}_X}
  \varphi^{f^-} (
     1_{f^{-1}(A)}
     v
  )
 +
  \norm{v}_{\mathcal{L}^{\infty}(\ProbNY)}
  1_{A^c} .
\end{align*}
Hence
\begin{equation*}
  \varphi^{f^-} (
     1_{f^{-1}(A)}
     v
  )
 -
  1_{A^c}
  \norm{v}_{\mathcal{L}^{\infty}(\ProbNY)}
         \lesssim_{\mathbb{P}_X}
  \varphi^{f^-} (v)
         \lesssim_{\mathbb{P}_X}
  \varphi^{f^-} (
     1_{f^{-1}(A)}
     v
  )
 +
  1_{A^c}
  \norm{v}_{\mathcal{L}^{\infty}(\ProbNY)} .
\end{equation*}
By multiplying
$
1_A
$,
we obtain
\begin{equation*}
  1_A
  \varphi^{f^-} (
     1_{f^{-1}(A)}
     v
  )
 -
         \lesssim_{\mathbb{P}_X}
  1_A
  \varphi^{f^-} (v)
         \lesssim_{\mathbb{P}_X}
  1_A
  \varphi^{f^-} (
     1_{f^{-1}(A)}
     v
  ).
\end{equation*}
Therefore, we get
 (\ref{eq:p51}).

Next by using 
 (\ref{eq:p51})
twice,
we have
\begin{align*}
&
  \varphi^{f^-} (
     1_{f^{-1}(A)}
       v_1
   +
     1_{f^{-1}(A^c)}
       v_2
  )
\\ &= 
  1_{f^{-1}(A)}
  \varphi^{f^-} (
     1_{f^{-1}(A)}
       v_1
   +
     1_{f^{-1}(A^c)}
       v_2
  )
+
  1_{f^{-1}(A^c)}
  \varphi^{f^-} (
     1_{f^{-1}(A)}
       v_1
   +
     1_{f^{-1}(A^c)}
       v_2
  )
\\ & \sim_{\mathbb{P}_X} 
  1_{f^{-1}(A)}
  \varphi^{f^-} (
    1_{f^{-1}(A)} (
     1_{f^{-1}(A)}
       v_1
   +
     1_{f^{-1}(A^c)}
       v_2
    )
  )
+
  1_{f^{-1}(A^c)}
  \varphi^{f^-} (
    1_{f^{-1}(A^c)} (
     1_{f^{-1}(A)}
       v_1
   +
     1_{f^{-1}(A^c)}
       v_2
    )
  )
\\&=
  1_{f^{-1}(A)}
  \varphi^{f^-} (
     1_{f^{-1}(A)}
       v_1
  )
+
  1_{f^{-1}(A^c)}
  \varphi^{f^-} (
     1_{f^{-1}(A^c)}
       v_2
  )
\\ & \sim_{\mathbb{P}_X} 
  1_{f^{-1}(A)}
  \varphi^{f^-} (
       v_1
  )
+
  1_{f^{-1}(A^c)}
  \varphi^{f^-} (
       v_2
  ).
\end{align*}

\item
By (2), we have
$
\varphi^{g^-} (
  \varphi^{g^-}(w)
    \circ
  g
)
   \sim_{\mathbb{P}_Y} 
\varphi^{g^-}(w)
$
for
$
w \in
  \mathcal{L}^{\infty}(\ProbNZ)
$.
So by
 (\ref{eq:cv_functor}),
\begin{align*}
     \varphi^{g^- \circ f^-}(w)
& \sim_{\mathbb{P}_X} 
   \varphi^{f^-}
   (\varphi^{g^-}(w))
\sim_{\mathbb{P}_X} 
   \varphi^{f^-}(
     \varphi^{g^-} (
       \varphi^{g^-}(w)
         \circ
       g
     )
   )
\\
&=
   (\varphi^{f^-}
      \circ
    \varphi^{g^-})(
       \varphi^{g^-}(w)
         \circ
       g
   )
\sim_{\mathbb{P}_X} 
   \varphi^{g^- \circ f^-}(
       \varphi^{g^-}(w)
         \circ
       g
   ) .
\end{align*}

\item
Assume
$
     \varphi^{g^-}(w_1)
   \lesssim_{\mathbb{P}_Y}
     \varphi^{g^-}(w_2)
$.
Then, by monotonicity and
 (\ref{eq:cv_functor}),
\begin{equation*}
     \varphi^{g^- \circ f^-}(w_1)
\sim_{\mathbb{P}_X} 
     \varphi^{f^-}(
       \varphi^{g^-}(w_1)
     )
\lesssim_{\mathbb{P}_X} 
     \varphi^{f^-}(
       \varphi^{g^-}(w_2)
     )
\sim_{\mathbb{P}_X} 
     \varphi^{g^- \circ f^-}(w_2) .
\end{equation*}

\qedhere
\end{enumerate}
\end{proof}

In Theorem \ref{thm:1},
two properties,
dynamic programming principle and time consistency are
usually introduced as axioms
(\cite{DS_2006}).
But, 
we derive them naturally here
from the fact that
the monetary value measure is a contravariant functor.

\lspace

Before ending this section,
we mention an interpretation of
the Yoneda lemma
in our setting.

\begin{thm} {[The Yoneda Lemma]}
\label{thm:yoneda}
For any monetary value measure
  $\Phi : \Prob^{op} \to \Set$
and an object
$
\ProbX
$
in
$\Prob$,
there exists a bijective correspondence 
$
  y_{\Phi, \ProbX}
$
specified by the following diagram:
\begin{equation*}
\xymatrix@C=30 pt@R=10 pt{
  y_{\Phi, \ProbX}
   :
  \Nat(\Prob(-, \ProbX), \Phi)
    \ar @{->}^-{
       \cong
    } [r]
&
  L^1(\ProbNX)
\\
  \alpha
    \ar @{|->} [r]
&
  \alpha_{\ProbX}
  (Id_X^-)
\\
  \tilde{\mathbf{u}}
&
  \mathbf{u}
    \ar @{|->} [l]
}
\end{equation*}
where
$\tilde{\mathbf{u}}$
is a natural transformation defined by
for any
$
f^-
  :
\ProbY
  \to
\ProbX
$
in
$\Prob$,
$
\tilde{\mathbf{u}}_{\ProbY}
  (f^-)
:=
\Phi f^- \mathbf{u}
$.
Moreover,
the correspondence is natural in both
$\Phi$
and
$\ProbX$.
\end{thm}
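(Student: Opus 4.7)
The plan is to prove the classical Yoneda lemma for the functor category $\Set^{\Prob^{op}}$; since everything in the statement is formulated categorically, the proof reduces to four routine verifications. The main care needed is bookkeeping about arrow directions, because both $\Prob(-, \ProbX)$ and $\Phi$ are contravariant.

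First, I would check that for each $\mathbf{u} \in L^1(\ProbNX)$ the family $\tilde{\mathbf{u}}$ is genuinely a natural transformation from $\Prob(-, \ProbX)$ to $\Phi$. For any $\Prob$-arrow $h^- : \ProbZ \to \ProbY$ and any $f^- \in \Prob(\ProbY, \ProbX)$, the naturality square commutes by contravariant functoriality (\ref{eq:cv_functor}):
\[
\Phi h^-(\tilde{\mathbf{u}}_{\ProbY}(f^-)) = \Phi h^- \circ \Phi f^-(\mathbf{u}) = \Phi(f^- \circ h^-)(\mathbf{u}) = \tilde{\mathbf{u}}_{\ProbZ}(f^- \circ h^-).
\]

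Second, I would verify that the two assignments $\alpha \mapsto \alpha_{\ProbX}(Id_X^-)$ and $\mathbf{u} \mapsto \tilde{\mathbf{u}}$ are mutually inverse. Starting from $\mathbf{u}$, the composite returns $\tilde{\mathbf{u}}_{\ProbX}(Id_X^-) = \Phi(Id_X^-)(\mathbf{u}) = \mathbf{u}$ by the identity law in (\ref{eq:cv_functor}). Starting from a natural transformation $\alpha$ and setting $\mathbf{u} := \alpha_{\ProbX}(Id_X^-)$, the naturality of $\alpha$ applied to an arbitrary $f^- : \ProbY \to \ProbX$ at the element $Id_X^- \in \Prob(\ProbX, \ProbX)$ yields
\[
\tilde{\mathbf{u}}_{\ProbY}(f^-) = \Phi f^-(\alpha_{\ProbX}(Id_X^-)) = \alpha_{\ProbY}(Id_X^- \circ f^-) = \alpha_{\ProbY}(f^-),
\]
so $\tilde{\mathbf{u}} = \alpha$, which closes the bijection.

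Finally, naturality in $\Phi$ is immediate because $y_{\Phi, \ProbX}$ is evaluation at the fixed element $Id_X^-$, and that tautologically commutes with postcomposition by any natural transformation $\beta : \Phi \to \Psi$. Naturality in $\ProbX$ is handled by the same mechanism: for $k^- : \bar{X}_1 \to \bar{X}_2$ and $\alpha \in \Nat(\Prob(-, \bar{X}_2), \Phi)$, applying naturality of $\alpha$ at $k^-$ to $Id_{X_2}^-$ gives $\Phi k^-(\alpha_{\bar{X}_2}(Id_{X_2}^-)) = \alpha_{\bar{X}_1}(k^-)$, which is exactly the value obtained by first precomposing $\alpha$ with the natural transformation $\Prob(-, k^-)$ and then evaluating at $Id_{X_1}^-$. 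There is no real mathematical obstacle here; the only pitfall is keeping the directions straight in the contravariant naturality squares, and every step reduces to (\ref{eq:cv_functor}) together with the definition of natural transformation.
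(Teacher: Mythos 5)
Your proof is correct: it is the standard Yoneda argument for presheaves, and each step (naturality of $\tilde{\mathbf{u}}$ via (\ref{eq:cv_functor}), the two composites being identities, and naturality in $\Phi$ and $\ProbX$ by evaluating naturality squares at identities) checks out with the arrow directions handled properly. Note that the paper itself states Theorem \ref{thm:yoneda} without any proof, treating it as the classical Yoneda lemma specialized to $\Prob^{op}$, so your write-up simply supplies the routine verification the authors omitted.
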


It makes sense to consider
the representable functor
$
  \Prob(-, \ProbX)
$
as a generalized 
\textit{time domain}
with time horizon
$
  \ProbX
$.
Then a natural transformation
from
$
  \Prob(-, \ProbX)
$
to
$
  \Phi
$
can be seen as a 
\textit{stochastic process}
that is (in a sense)
adapted to
$
  \Phi
$,
and its corresponding 
$\Sigma_X$-measurable
random variable
represents a terminal value (payoff) at the horizon.

The Yoneda lemma says that we have a bijective correspondence between 
those stochastic processes and random variables.


\def \imbed_ref  {1}

\if \imbed_ref  1

\else 

\bibliographystyle{apalike}
\bibliography{../../../taka_e}

\fi 

\end{document}